\newcommand{\np}{{NP}}
\newcommand{\eset}{{\it EC3Set}}
\newcommand{\stpg}{{\it STPG}}
\newcommand{\hpath}{{\it Ham-Path}}
\newcommand{\bcs}{{\textit{BCS}}}
\newcommand{\bcp}{{\textit{BCP}}}
\newcommand{\mnwcs}{{\textit{MNWCS}}}
\newcommand{\bcst}{{\textit{BCST}}}
\newcommand{\colb}[1]{{\textit{#1}}}
\begin{document}
	\title{The Balanced Connected Subgraph Problem}
	%
	%
	\author{Sujoy Bhore\inst{1,}\thanks{The author is partially supported by the Lynn and William Frankel Center for Computer Science, Ben-Gurion University of the Negev, Israel.}\and
		Sourav Chakraborty\inst{2} \and
		Satyabrata Jana\inst {2}\and \\
		Joseph S. B. Mitchell\inst{3} \and
		Supantha Pandit\inst{3,}\thanks{The author is partially supported by the Indo-US Science \& Technology Forum (IUSSTF) under the SERB Indo-US Postdoctoral Fellowship scheme with grant number 2017/94, Department of Science and Technology, Government of India.}\and
		Sasanka Roy\inst{2}}
	\authorrunning{S. Bhore et al.}
	%
	\institute{Ben-Gurion University, Beer-Sheva, Israel \and Indian Statistical Institute, Kolkata, India \and
		Stony Brook University, Stony Brook, NY, USA
		\\ \email{\{sujoy.bhore,chakraborty.sourav,satyamtma,\\joseph.s.b.mitchell,pantha.pandit,sasanka.ro\}@gmail.com}}

	\maketitle              
	\begin{abstract}
		The problem of computing induced subgraphs that satisfy some specified restrictions arises in various applications of graph algorithms and has been well studied.
		In this paper, we consider the following \emph{Balanced Connected Subgraph (shortly, BCS)} problem. 
		The input is a graph $G=(V,E)$, with each vertex in the set $V$ having an assigned color, ``red'' or ``blue''. 
		We seek a maximum-cardinality subset $V'\subseteq V$ of vertices that is {\em color-balanced} (having exactly $|V'|/2$ red nodes and $|V'|/2$ blue nodes), 
		such that the subgraph induced by the vertex set $V'$ in $G$ is connected.
		We show that the BCS problem is \np-hard, even for bipartite graphs $G$ (with red/blue color assignment not necessarily being a proper 2-coloring). 
		Further, we consider this problem for various classes of the input graph $G$, including, e.g., planar graphs, chordal graphs, trees, split graphs, bipartite graphs with a proper red/blue $2$-coloring, and graphs with diameter $2$. 
		For each of these classes either we prove \np-hardness or design a polynomial time algorithm. 
		
		\keywords{Balanced connected subgraph \and Trees \and  Split graphs \and  Chordal graphs \and  Planar graphs \and Bipartite graphs \and  \np-hard \and Color-balanced.}
	\end{abstract}
	\section{Introduction}
	Several problems in graph theory and combinatorial optimization involve determining if a given graph $G$ has a subgraph with certain properties.
	Examples include seeking paths, cycles, trees, dominating sets, cliques, vertex covers, matching, independent sets, bipartite subgraphs, etc. 
	Related optimization problems include finding a maximum clique, a maximum (connected) vertex cover, a maximum independent set, a minimum (connected) dominating set, etc.  
	These well-studied problems have significant theoretical interest and many practical applications. 
	
	In this paper, we consider the problem in which we are given a simple connected graph $G=(V,E)$ whose vertex set $V$ has each node being ``red'' or ``blue'' 
	(note, the color assignment might not be a proper $2$-coloring of the vertices, i.e., we allow nodes of the same color to be adjacent in $G$).
	We seek a maximum-cardinality subset $V'\subseteq V$ of the nodes such that $V'$ is {\em color-balanced}, i.e. having same number of red and blue nodes in $V'$, and such that the induced subgraph $H$ by $V'$ in $G$ is connected.  We refer to this problem as the \colb{Balanced Connected Subgraph (\bcs)} problem:
	
	\begin{framed}
		\noindent {\bf \textit{{\emph{Balanced Connected Subgraph (\textit{BCS}) Problem}}}}\\
		{\bf Input:} A graph $G=(V,E)$, with node set $V=V_R\cup V_B$ partitioned into red nodes ($V_R$) and blue nodes ($V_B$).\\
		{\bf Goal:} Find a maximum-cardinality color-balanced subset $V'\subseteq V$ that induces a connected subgraph $H$.
	\end{framed}

	Notice that, the \bcs~problem is a special case of the \colb{Maximum Node Weight Connected Subgraph (\mnwcs)} problem \cite{Johnson1985}. 
	In the \mnwcs~problem, we are given a connected graph $G(V,E)$, with a (possibly negative) integer weight $w(v)$ associated with each node $v\in V$, and an integer bound $B$; the objective is to decide whether there exists a subset $V'\subseteq V$ such that the subgraph induced by $V'$ is connected and the total weight of the vertices in $V'$ is at least $B$. In the \mnwcs~problem, if we assign the weight of each vertex is either $+1$ (red) or $-1$ (blue), then deciding whether there exist a $V'\subseteq V$ such that $|V'| \geq k$, subgraph induced by $V'$ in $G$ is connected and total of vertices in $V'$ is exactly zero 
	is equivalent as the \bcs~problem. The \mnwcs~problem along with its variations have numerous practical application in various fields. This includes designing fiber-optic networks \cite{Felix1998}, oil-drilling \cite{Hochbaum1994}, systems biology \cite{Backes2012,Dittrich2008,Yamamoto2009}, wildlife corridor design \cite{Dilkina2010}, computer vision \cite{Chen2017,Chen2012}, forest planning \cite{Carvajal2013}, and many more (see \cite{El-Kebir2014} and the references therein). 
	Some of these applications are best suited to the \bcs~problem.
	

	\subsection{Related work}
	
	The bichromatic inputs, often referred in the literature as red-blue input, has appeared extensively in numerous 
	problems. For bipartite trees, see \cite{abellanas1999bipartite}. In \cite{biniaz2014bottleneck,dumitrescu2001matching,dumitrescu2002partitioning}
	colored points have been considered in the context of matching and partitioning problem. For a detailed survey on geometric problems with red-blue points; see \cite{kaneko2003discrete}.
	In \cite{aichholzer2015balanced}, Aichholzer et al. considered the balanced island problem and devised polynomial algorithms for points considered on plane. From combinatorial side, Balanchandran et al. \cite{balachandran2017system} studied the problem of unbiased representatives in a set of bicolorings.  In this paper, they have mentioned the usefulness of the unbiased representatives in drug testing. While the drugs are tested over a large population, the effectiveness of a new drug is measured under various attributes e.g., weight, height, age etc. One would require to sample representative in certain \emph{balanced} manner. Kaneko et al. \cite{kaneko2017balancing} considered the problem of balancing the colored points on the line. Subsequently, Bereg et al. \cite{bereg2015balanced} studied the balanced partitions of $3$-colored geometric sets on the plane. 
	
	On the other hand, finding a certain type of subgraph in a graph is considered to be a fundamental algorithmic question. In \cite{feige2001dense}, Feige et al. studied the dense $k$-subgraph problem where given a graph $G$ and a parameter $k$, the goal is to find a set of $k$ vertices with maximum average degree in the subgraph induced by this set. From parameterized algorithms side, Crowston et al. \cite{crowston2013maximum} considered the balanced subgraph problem. Kierstead et al. \cite{kierstead1992colorful} studied the problem of finding colorful induced subgraph in a properly colored graph. This led us to study the balanced connected subgraph problem on graphs.
	In \cite{derhy2009finding}, Derhy and Picouleau considered the problem of finding induced trees on both weighted and unweighted graphs and obtained hardness and algorithmic results. 
	They have studied some particular classes of graphs like the bipartite graphs or the triangle-free graphs. Moreover, they have considered the case where the number of prescribed vertices is bounded.

	\subsection{Our contributions} 
	In this paper, we consider the balanced connected subgraph problem on various graph families and present several hardness and algorithmic results. 
	
	On the hardness side, in Section \ref{np-hardness}, we prove that the \bcs~problem is \np-hard on general graphs, even for planar graphs, bipartite graphs (with a general red/blue color assignment, not necessarily a proper 2-coloring), and chordal graphs. 
	Furthermore, we show that the existence of a balanced connected subgraph containing a specific vertex is \np-complete. In addition to that, we prove that finding the maximum balanced path in a graph is \np-hard. 
	
	On the algorithmic side, in Section \ref{algorithmicresults}, we devise polynomial-time algorithms 
	for trees (in $O(n^4)$ time), split graphs (in $O(n^2)$ time), bipartite graphs with a proper 2-coloring (in $O(n^2)$ time), and graphs with diameter~$2$ (in $O(n^2)$ time). Here $n$ is the number of vertices in the input graphs.



	
	\section{Hardness results}\label{np-hardness}

	\subsection{\bcs~problem}\label{NPBCP}
	In this section we prove that the \bcs~problem is \np-hard for bipartite graph with a general red/blue color assignment, not necessarily a proper 2-coloring. We give a reduction from the \colb{Exact-Cover-by-3-Sets (\eset)} problem \cite{Garey1979}. 
	In this \eset~problem, we are given a set $U$ with $3k$ elements and a collection $S$ of $m$ subsets of $U$ such that each $s_i\in S$ contains exactly $3$ elements. 
	The objective is to find an exact cover for $U$ (if exists), i.e., a sub-collection $S'\subseteq S$ such that every element of $U$ occurs in exactly one member of $S'$. During the reduction, we generate an instance $G=(R\cup B,E)$ of \bcs~problem from an instance $X(S,U)$ of the \eset~problem as follows:	
	
	\noindent {\bf Reduction:} For each set $s_i\in S$, we take a blue vertex $s_i \in B$. For each element $u_j\in U$, we take a red vertex $u_j \in R$. Now consider a set $s_i \in S$ which contains three elements $u_\alpha, u_\beta$, and $u_\gamma$, then we add 3 edges $(s_i,u_\alpha)$, $(s_i,u_\beta)$, and $(s_i,u_\gamma)$ in $E$. Additionally, we consider a path of $5k$ blue vertices starting and ending with vertices $b_1$ and $b_{5k}$ respectively. Similarly, we consider a path of $3k$ red vertices starting and ending with vertices $r_1$ and $r_{3k}$ respectively. We connect these two paths by joining the vertices $r_{3k}$ and $b_1$ by an edge. Finally, we connect each vertices $s_i$ with $b_{5k}$ by edges. This completes the construction. See Figure \ref{fig:my_label} for the complete construction.
	
	\begin{figure}[ht]
		\centering
		\includegraphics[scale=.6]{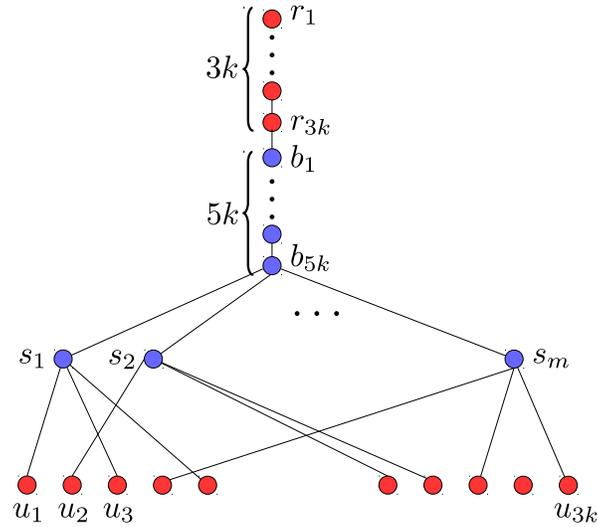}
		\caption{Construction of the instance $G$ of the \bcs~problem.}\label{fig:my_label}
	\end{figure}
	
	Clearly, the number of vertices and edges in $G$ are polynomial in terms of number of elements and sets in $X$. Hence the construction can be done in polynomial time. We now prove the following theorem.
	
	\begin{lemma}\label{lem-bcs-np-hard}
		The instance $X$ of the \eset~problem has a solution if and only if the instance $G$ of the \bcs~problem has a connected balanced subgraph $T$ with $12k$ vertices ($6k$ red and $6k$ blue).
	\end{lemma}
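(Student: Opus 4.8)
The plan is to establish the two implications separately, in both cases leaning on the rigid ``backbone'' structure of $G$: the red path $r_1 - \cdots - r_{3k}$, the blue path $b_1 - \cdots - b_{5k}$, the single bridge edge $r_{3k}b_1$, and the fact that each element vertex $u_j$ has neighbours only among the set vertices $s_i$, which themselves touch the backbone only at $b_{5k}$.

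For the ``only if'' direction, suppose $X$ has an exact cover $S' \subseteq S$ (necessarily $|S'| = k$). I would take
\[
V' \;=\; \{r_1,\dots,r_{3k}\} \cup \{b_1,\dots,b_{5k}\} \cup \{u_1,\dots,u_{3k}\} \cup S',
\]
which has $3k + 3k = 6k$ red vertices and $5k + k = 6k$ blue vertices, hence is balanced with $|V'| = 12k$. To see that the induced subgraph $T$ is connected: the vertices $r_1, \dots, r_{3k}, b_1, \dots, b_{5k}$ span an induced path, every $s_i \in S'$ is adjacent to $b_{5k}$, and since $S'$ covers $U$ each $u_j$ lies in some $s_i \in S'$ and is thus joined to $b_{5k}$ through that $s_i$. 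So $T$ is a connected balanced subgraph on $12k$ vertices.

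For the ``if'' direction, suppose $V'$ induces a connected balanced subgraph $T$ with $|V'| = 12k$, i.e.\ with $6k$ red and $6k$ blue vertices. Since $G$ has exactly $6k$ red vertices in total, $V'$ must contain all of them, every $r_i$ and every $u_j$. The key step is then to show $V'$ is forced to contain the whole blue path. Pick any $u_j \in V'$; its only neighbours in $G$ are set vertices, so some $s^\ast \in V'$, and by connectivity there is a path $P$ in $T$ from $s^\ast$ to $r_1$. Since $r_1$ has degree $1$ and each other vertex of the red path has degree $2$, $P$ must reach $r_1$ by traversing $r_{3k} - \cdots - r_1$ and can only have entered the red path via the bridge $b_1 r_{3k}$; and since every $b_i$ with $i < 5k$ has degree $2$ within the blue path, $P$ is in turn forced to traverse $b_{5k} - \cdots - b_1$. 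Hence all $5k$ blue path vertices lie in $V'$, leaving room for exactly $6k - 5k = k$ set vertices in $V'$. Finally, each of the $3k$ vertices $u_j \in V'$ must be adjacent in $T$ to one of these $k$ set vertices, and since each set vertex covers only $3$ elements, the $k$ chosen sets cover at most $3k$ elements, so they cover all of $U$ with no overlap --- an exact cover.

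The main obstacle is making the ``forced backbone'' argument fully rigorous: one must rule out every alternative way of routing $P$ through the set vertices or back and forth across $b_{5k}$ that would let $T$ reach the red path while skipping some $b_i$, and one must track the colour budget so that precisely $k$ set vertices remain. Once that rigidity is pinned down by the degree-$1$/degree-$2$ observations along the two paths, the remaining counting and the extraction of the exact cover are routine.
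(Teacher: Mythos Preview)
Your proposal is correct and follows essentially the same approach as the paper: both directions hinge on the backbone path and the counting that forces exactly $k$ set vertices. Your backward direction is in fact more carefully argued than the paper's (which simply asserts that omitting any $b_i$ would disconnect the $r_i$'s and ``reduce the size''); your observation that $G$ has exactly $6k$ red vertices, forcing $V'$ to contain $r_1$ and every $u_j$, followed by the degree-$1$/degree-$2$ path-tracing, makes the rigidity step explicit where the paper leaves it implicit.
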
 
	
	\begin{proof}
		Assume that \eset~problem has a solution. Let $S^*$ be an optimal solution in it. We choose the corresponding vertices of $S^*$ in $T$. Since this solution covers all $u_j$'s. So we select all $u_j$'s in $T$. Finally we select all the $5k$ blue and $3k$ red vertices in $T$, resulting in a total of $6k$ red and $6k$ blue vertices.
		
		On the other hand, assume that there is a balanced tree $T$ in $G$ with $6k$ vertices of each color. The solution must pick the $5k$ blue vertices $b_1, \ldots, b_{5k}$. Otherwise, it exclude the $3k$ red vertices $r_1, \ldots,r_{3k}$, and reducing the size of the solution. Since the graph $G$ has at most $6k$ red vertices, at most $k$ vertices can be picked from the set $s_1, \ldots,s_m$ and need to cover all the $3k$ red vertices corresponding to $u_j$ for $1\leq j\leq 3k$. Hence, this $k$ sets give an exact cover. \qed
	\end{proof}
	
	It is easy to see that the graph we constructed from the \colb{Exact-Cover-by-3-Sets (\eset)} problem in Figure \ref{fig:my_label} is indeed a bipartite graph. Hence we have the following theorem.
	
	\begin{theorem}\label{thm-bcs-np-hard-bipartite} 
		\bcs~problem is \np-hard for bipartite graphs.
	\end{theorem}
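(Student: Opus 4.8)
The plan is to obtain the theorem directly from the reduction already described together with Lemma~\ref{lem-bcs-np-hard}, the only extra ingredient being that the constructed graph is bipartite. Recall that from an instance $X(S,U)$ of \eset~with $|U|=3k$ we build $G=(R\cup B,E)$ in polynomial time, and that \eset~is \np-complete~\cite{Garey1979}. By Lemma~\ref{lem-bcs-np-hard}, $X$ has an exact cover if and only if $G$ has a color-balanced connected subgraph on $12k$ vertices. I would first note that $12k$ is in fact the largest size any balanced subgraph of $G$ can have: $G$ has exactly $|U|+3k=6k$ red vertices, so any color-balanced vertex set uses at most $6k$ red and $6k$ blue vertices. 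Consequently, an exact optimum for \bcs~on $G$ determines whether $X$ has a solution, so the optimization problem (not merely a decision version) inherits the hardness, provided $G$ belongs to the graph class in question.

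It then remains to check that $G$ is bipartite. The approach I would take is to show that $G$ has no odd cycle. First observe that every edge of the path $r_1,\dots,r_{3k},b_1,\dots,b_{5k}$ is a bridge of $G$ (removing it disconnects some $r_i$, or some $b_j$ with $j<5k$, from the rest of $G$), so no cycle of $G$ can use any such edge. Hence every cycle of $G$ uses only edges of the forms $(s_i,u_j)$ and $(s_i,b_{5k})$, and therefore lies inside $G[S\cup U\cup\{b_{5k}\}]$. That induced subgraph is bipartite, with $S$ on one side and $U\cup\{b_{5k}\}$ on the other: there are no edges within $S$, none within $U$, and none between $b_{5k}$ and any $u_j$. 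So every cycle of $G$ is even, i.e.\ $G$ is bipartite, and combining this with the previous paragraph yields Theorem~\ref{thm-bcs-np-hard-bipartite}.

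I do not expect a genuine obstacle here; the substantive work was already done in Lemma~\ref{lem-bcs-np-hard}. The one point to be slightly careful about is the interaction of the path lengths $5k$ and $3k$ and the extra connecting edge $(r_{3k},b_1)$ with bipartiteness, but this causes no trouble: a path is bipartite regardless of its length, the two paths are joined end to end (so their union is again a path), and $b_{5k}$ is the only vertex at which this path component is attached to the rest of $G$, so no parity conflict can propagate back into it.
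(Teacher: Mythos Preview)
Your proposal is correct and follows the same approach as the paper: invoke Lemma~\ref{lem-bcs-np-hard} and observe that the constructed graph $G$ is bipartite. The paper dispatches the bipartiteness claim in a single sentence (``it is easy to see''), whereas you supply a careful verification via the absence of odd cycles and also make explicit why $12k$ is the maximum possible size of any balanced subgraph of $G$; both additions are sound and, if anything, fill in details the paper leaves to the reader.
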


	\subsection{\np-hardness: \bcs~problem over special classes of graphs}
	In this section, we show that the \bcs~problem is \np-hard even if we restrict the graph classes to chordal, or planar graphs.
	\subsubsection{Chordal graphs:}
	We prove that the \bcs~problem is \np-hard where the input graph is a chordal graph. The hardness construction is similar to the construction in Section~\ref{NPBCP}; we modify the construction so that the graph is chordal. In particular, we add edges between $s_{i}$ and $s_{j}$ for each $i \neq j, 1 \leq i,j \leq m$. For this modified graph, it is easy to see that a lemma identical to Lemma~\ref{lem-bcs-np-hard} holds. Hence, we conclude that the \bcs~ problem is \np-hard for chordal graphs.
	
	\subsubsection{Planar graphs:}
	In this section we prove that \bcs~problem is \np-hard for planar graphs. We give a reduction from the Steiner tree problem in planar graphs (\stpg) \cite{Garey1979}. In this problem, we are given a planar graph $G = (V,E)$, a subset $X \subseteq V$, and a positive integer $k \in \mathbb{N}$. The objective is to find a tree $T =(V',E')$ with at most $k$ edges such that $ X \subseteq V'$.

	\vspace{.2cm}
	\noindent {\it Reduction:} We generate an instance $H=(R \cup B,E(H))$ for the \bcs~problem from an instance $G = (V,E)$ of the \stpg~problem. We color all the vertices in $G$ as blue. 
	We now create red color vertices and connect to these vertices. For each vertex $u_i \in X$, we add a vertex $u'_i$ in $H$ whose color is red add connect $u'_i$ to $u_i$ via an edge. Additionally, we take a set $Z$ of $(k+1-|X|)$ red vertices in $H$ and the edges $(z_{j},u'_1)$ into $E(H)$, for each $z_{j} \in Z$. Hence we have, $B =V$, and $R= Z \cup \{u'_i\} | \ 1 \leq i \leq |X|\} $. Note that $|R|<|B|$ and $ |R|=(k+1) $. This completes the construction. For an illustration see Figure \ref{fig:planar}.
	\begin{figure}[ht]
		\centering
		\includegraphics[scale=.6]{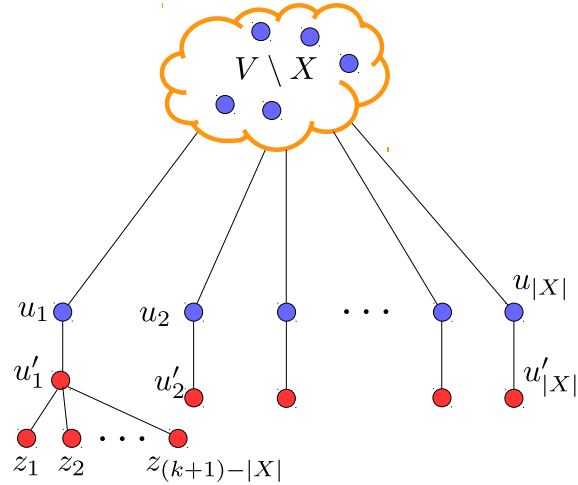}
		\caption{Schematic construction for planar graphs.}
		\label{fig:planar}
	\end{figure}
	Clearly the number of vertices and edges in $H$ are polynomial in terms of vertices in $G$. Hence the construction can be done in polynomial time. We now prove the following theorem.
	
	\begin{theorem}\label{theoBCS}
		\stpg~has a solution if and only if $H$ of the \bcs~problem has a balanced connected subgraph with $(k+1)$ vertices of each color.
	\end{theorem}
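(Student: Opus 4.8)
The plan is to prove the two directions by translating Steiner trees of $G$ into balanced connected induced subgraphs of $H$ and back, exploiting the fact that in $H$ every red vertex is a pendant attachment to the blue part. Throughout I assume, without loss of generality, that $G$ is connected, that $|X|\le k+1$ (otherwise no tree with $k$ edges contains $X$, and both sides of the equivalence are ``no''), and that $|V(G)|\ge k+1$ (if $k\ge |V(G)|-1$ then \stpg~is a trivial ``yes'', and so is the corresponding \bcs~instance, since all of $V(G)$ together with $R$ is a connected set with $k+1$ vertices of each color after possibly discarding blue vertices).

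($\Rightarrow$) Suppose \stpg~has a solution $T=(V',E')$, a tree with $|E'|\le k$ and $X\subseteq V'$; then $|V'|=|E'|+1\le k+1$. Since $G$ is connected with at least $k+1$ vertices, I repeatedly add a vertex of $G$ adjacent to the current set until I obtain a set $W$ of exactly $k+1$ blue vertices that is connected in $G$ (hence induces a connected subgraph, adding edges only helping), with $X\subseteq V'\subseteq W$. Now take $V(H'):=W\cup R$. Each $u_i'$ is adjacent to $u_i\in X\subseteq W$, and each $z_j$ is adjacent to $u_1'$, which is adjacent to $u_1\in W$; hence the induced subgraph $H'$ is connected and has exactly $k+1$ vertices of each color.

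($\Leftarrow$) Conversely, suppose $H$ has a balanced connected induced subgraph $H'$ with $k+1$ vertices of each color. Since $|R|=k+1$, $H'$ contains every red vertex. The structural key is that in $H$ each red vertex is pendant on the blue part: every $u_i'$ with $i\neq 1$ has degree $1$ (neighbor $u_i$), every $z_j$ has degree $1$ (neighbor $u_1'$), and $u_1'$ has exactly one neighbor, $u_1$, that is not itself a degree-$1$ vertex. Hence no red vertex can be an internal vertex of a simple path between two blue vertices, so any two blue vertices of $H'$ are joined inside $H'$ by a path using only blue vertices; that is, the blue vertices of $H'$ induce a connected subgraph of $G$. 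Connectivity of $H'$ also forces each $u_i'$ (and $u_1'$) to have its blue neighbor in $H'$, so $X\subseteq V(H')$. Thus the blue part of $H'$ is a connected subgraph of $G$ on exactly $k+1$ vertices containing $X$; any spanning tree of it is a tree with $k$ edges containing $X$, i.e.\ a valid \stpg~solution.

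The main obstacle is the ``only if'' direction, and precisely the claim that deleting all red vertices leaves the blue vertices of $H'$ connected: this is exactly where the pendant structure of the gadget (all red vertices except $u_1'$ have degree $1$, and $u_1'$ has a unique ``useful'' neighbor $u_1$) is doing the work, and the same observation forces $X\subseteq V(H')$. The counting argument ($|R|=k+1$ exactly, so all red vertices must be used, capping the blue count at $k+1$) and the trivial-case bookkeeping above are routine.
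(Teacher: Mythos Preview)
Your proof is correct and follows the same approach as the paper's: in one direction extend a Steiner tree to $k+1$ blue vertices and adjoin all of $R$; in the other, use that all $k+1$ red vertices must appear, peel them off via the pendant structure to obtain a connected blue set of size $k+1$ containing $X$, and take a spanning tree. If anything, your backward direction is more careful than the paper's---you correctly account for the red--red edges $(z_j,u_1')$ and explicitly argue $X\subseteq V(H')$, both of which the paper glosses over.
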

	
	\begin{proof}
		Assume that \stpg~has a solution. Let $T=(V',E')$ be the resulting Steiner tree which contains at most $k$ edges and $X\subseteq V'$. If $ |V'|=(k+1) $ then the subgraph of $H$ induced by $(V' \cup R)$ is connected and balanced with $(k+1)$ vertices of each color. If $ |V'| < (k+1) $ then we take a set $ Y $ of $((k+1)-|V'|)$ many vertices from $ V $ such that the subgraph of $G$ induced by $(V' \cup Y)$ is connected. Clearly $ |V'|=(k+1) $. Now the subgraph of $H$ induced by $(V' \cup Y \cup R)$ is connected and balanced with $(k+1)$ vertices of each red and blue color.  
		
		On the other hand, assume that there is a balanced connected subgraph $H'$ of $H$ with $(k+1)$ vertices of each color. Note that, except vertex $u'_1$, in $H$ all the red vertices are of degree $1$ and connected to blue vertices. Let $G'$ be the subgraph of $G$ induced by all blue vertices in $H'$. Since $H$ is connected and there is no edge between any two red vertices,  $G'$ is connected. Since $G'$ contains $(k+1)$ vertices, any spanning tree $T$ of $H'$ contains $k$ edges. So $T$ is a solution of \stpg~problem. \qed
	\end{proof}
	Hence we have the following theorem.
	
	\begin{theorem}\label{thm-bcs-np-hard} 
		\bcs~problem is \np-hard for planar graphs.
	\end{theorem}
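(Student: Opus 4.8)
The statement is an immediate consequence of the reduction described above together with Theorem~\ref{theoBCS}, so the plan is simply to assemble the three ingredients: polynomial-time computability of the reduction, correctness of the reduction (already supplied by Theorem~\ref{theoBCS}), and the hardness of the source problem. First I would recall that the Steiner tree problem in planar graphs, \stpg, is \np-hard~\cite{Garey1979}. Given an instance $(G=(V,E),X\subseteq V,k)$ of \stpg, the reduction produces the instance $H=(R\cup B,E(H))$ exactly as constructed before Theorem~\ref{theoBCS}; as already observed there, $|R|$ and $|B|$ (hence $|E(H)|$) are polynomial in $|V|$, so $H$ is computable from $(G,X,k)$ in polynomial time, and the target parameter $k+1$ is likewise computed in polynomial time.

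Next I would invoke Theorem~\ref{theoBCS}: $(G,X,k)$ is a yes-instance of \stpg~if and only if $H$ admits a balanced connected subgraph with $(k+1)$ vertices of each color. Since the decision version of \bcs~(``is there a balanced connected subgraph on at least $2(k+1)$ vertices?'') is equivalent here — any balanced connected subgraph of $H$ with more than $k+1$ vertices of each color is impossible because $|R|=k+1$ — this yields a valid many-one reduction from \stpg~to the decision version of \bcs.

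The one point that still needs checking is that the constructed graph $H$ is genuinely planar, so that the reduction lands inside the claimed graph class. Here I would argue that $H$ is obtained from the planar graph $G$ by two planarity-preserving operations: (i) attaching, for each $u_i\in X$, a new pendant vertex $u'_i$ of degree $1$ adjacent only to $u_i$ (adding a degree-$1$ vertex cannot destroy planarity, as one can place it in any face incident to $u_i$); and (ii) attaching the set $Z$ of $(k+1-|X|)$ new degree-$1$ vertices, all adjacent only to $u'_1$ (again a star of pendant vertices at a single vertex, placeable in one face incident to $u'_1$). Hence a planar embedding of $G$ extends to a planar embedding of $H$, so $H$ is planar.

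Combining these observations — a polynomial-time reduction, correctness via Theorem~\ref{theoBCS}, planarity of the output, and \np-hardness of \stpg — establishes that \bcs~is \np-hard for planar graphs. The only mildly delicate step is the planarity verification in the last paragraph; everything else is bookkeeping, since Theorem~\ref{theoBCS} already carries the combinatorial content. \qed
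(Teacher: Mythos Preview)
Your proposal is correct and follows the same approach as the paper: the paper simply writes ``Hence we have the following theorem'' after Theorem~\ref{theoBCS}, treating the result as an immediate corollary of the reduction and its correctness. If anything, you are more careful than the paper, since you explicitly verify that $H$ is planar (by noting that only pendant vertices are attached to $G$), a point the paper leaves implicit.
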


	\subsection{NP-completeness for \bcs~problem for a specific vertex.}

	In this section we prove that the existence of a balanced subgraph containing a specific vertex is \np-complete. We call this problem the \bcs-existence problem. The reduction is similar to the reduction used in showing the \np-hardness of the \bcs~problem; we also use here a reduction from the \eset~problem (see Section \ref{NPBCP} for the definition).

	\noindent {\bf Reduction:} Assume that we are given a \eset~problem instance $X=(U,S)$, where set $U$ contains $3k$ elements and a collection $S$ of $m$ subsets of $U$ such that each $s_i\in S$ contains exactly $3$ elements.  We generate an instance $G(R,B,E)$ of the \bcs-existence problem from $X$ as follows. The red vertices $R$ are the elements $u_j\in U$; i.e., $R=U$. The blue vertices $B$ are the $3$-element sets $s_i\in S$; i.e., $B=S$. 
	For each blue vertex $s_i=\{u_\alpha,u_\beta,u_\gamma\} \in S=B$, we add the 3 edges $(s_i,u_\alpha)$, $(s_i,u_\beta)$, and $(s_i,u_\gamma)$ to the set $E$ of edges of $G$.
	%
	%
	We instantiate an additional set of $2k$ blue vertices, $\{b_1,\ldots,b_{2k}\}$, and add edges to $E$ to link them into a path $(b_1,b_2,\ldots,b_{2k})$. Finally, we add an edge from $b_{2k}$ to each of the blue vertices $s_i$. Refer to Figure \ref{fig:vertex}. 
	
	\begin{figure}[ht]
		\centering
		\includegraphics[scale=.6]{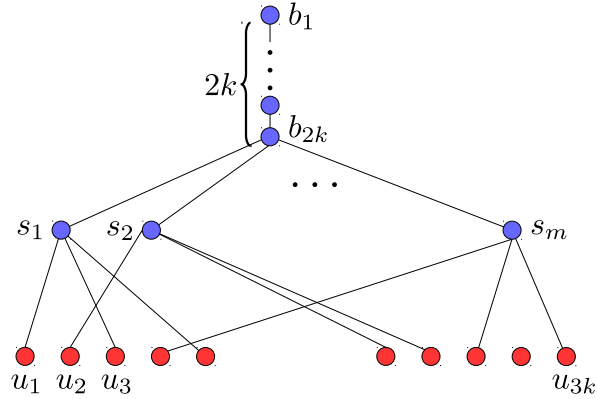}
		\caption{Construction of the instance $G$ of the \bcs~problem containing $ b_{1}$.}\label{fig:vertex}
	\end{figure}
	Clearly, the number of vertices and edges in $G$ are polynomial in terms of number of elements and sets in the size of the \eset~problem instance $X$, and the construction can be done in polynomial time. We now prove the following lemma.
	
	\begin{lemma}\label{lemma-bcs-np-hard}
		The instance $X$ of the \eset~problem has a solution iff the instance $G$ of the corresponding \bcs~existence problem has a balanced subgraph $T$ containing the vertex $b_1$.
	\end{lemma}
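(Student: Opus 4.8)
The plan is to reuse, with minor changes, the counting argument from the proof of Lemma~\ref{lem-bcs-np-hard}, taking advantage of the rigidity of the length‑$2k$ path hanging off $b_1$. Throughout, I would rely on three structural facts about $G$: every red vertex $u_j$ is adjacent only to set‑vertices $s_i$; the vertex $b_1$ has degree one, with sole neighbour $b_2$; and the only edges leaving the path $b_1, b_2, \dots, b_{2k}$ toward the rest of $G$ are the edges $(b_{2k}, s_i)$.

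For the ``if'' direction, suppose the \eset~instance $X$ has an exact cover $S^{*}\subseteq S$, so $|S^{*}| = k$. I would take the vertex set $V' = \{b_1,\dots,b_{2k}\}\cup S^{*}\cup U$ and verify the three requirements directly: $V'$ contains $b_1$; it has $2k+|S^{*}| = 3k$ blue vertices and $|U| = 3k$ red vertices, so it is color‑balanced; and the induced subgraph is connected, since $b_1,\dots,b_{2k}$ is a path, $b_{2k}$ is joined to every $s_i \in S^{*}$, and each element $u_j$ is adjacent to (at least) the unique $s_i\in S^{*}$ containing it. This exhibits the desired balanced connected subgraph through $b_1$.

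For the ``only if'' direction, let $T$, with vertex set $V'$, be a balanced connected induced subgraph containing $b_1$. Since $T$ is balanced and already contains the blue vertex $b_1$, it must contain some red vertex $u_j$; connectivity then gives a simple path in $T$ from $b_1$ to $u_j$, and by the degree/adjacency facts above this path must run $b_1, b_2, \dots, b_{2k}$ and then use an edge $(b_{2k}, s_i)$, so $V'$ contains all $2k$ path vertices and at least one set‑vertex. Now let $t = |V'\cap S|$ and $r = |V'\cap U|$; balance forces $2k + t = r$. On the one hand $r \le |U| = 3k$, hence $t \le k$. On the other hand, connectivity of $T$ implies every $u_j\in V'$ has a neighbour in $V'$, which (being red‑adjacent only to set‑vertices) lies in $V'\cap S$; thus $r \le 3t$, and combined with $2k + t = r$ this gives $t \ge k$. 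Therefore $t = k$ and $r = 3k$, so the $k$ sets of $V'\cap S$ jointly meet all $3k$ elements of $U$; three elements each and $3k$ in total force them to be pairwise disjoint, i.e., $V'\cap S$ is an exact cover of $U$.

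I expect the only delicate point to be the first step of the ``only if'' direction: making rigorous that connectivity together with $b_1 \in V'$ pins down the entire path $b_1,\dots,b_{2k}$ (and at least one $s_i$) inside $V'$, and the closely related claim that $r \le 3t$ because every selected red vertex is adjacent to a selected set‑vertex. These follow from the degree‑one property of $b_1$ and the ``bottleneck'' at $b_{2k}$, but they are the place where the argument genuinely uses the gadget's geometry; the remaining inequalities are the same short counting as in Lemma~\ref{lem-bcs-np-hard}.
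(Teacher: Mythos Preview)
Your proposal is correct and follows essentially the same approach as the paper: both directions match, with the forward construction identical and the converse using the same two inequalities $t\le k$ (from $|U|=3k$ and the forced inclusion of the $2k$ path vertices) and $t\ge k$ (from $r\le 3t$ together with balance). Your write‑up is in fact a bit more explicit than the paper's in justifying why connectivity through $b_1$ forces $\{b_1,\dots,b_{2k}\}\subseteq V'$ and why $r\le 3t$, and you add the (needed) remark that the $k$ selected sets are pairwise disjoint.
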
 
	
	\begin{proof}
		Assume that the \eset~problem has a solution, and let $S^*$ be the collection of $k=|S^*|$ sets of $S$ in the solution. 
		%
		%
		Then, we obtain a balanced subgraph $T$ that contains $b_1$ as follows: $T$ is the induced subgraph of the $3k$ red vertices $U$, together with the $k$ blue vertices $S^*$ and the $2k$ blue vertices $b_1,\ldots,b_{2k}$. Note that $T$ is balanced and connected and contains $b_1$.
		
		Conversely, assume there is a balanced connected subgraph $T$ containing $b_{1}$. Let $t$ be the number of (blue) vertices of $S$ within $T$. First, note that $t\leq k$. (Since $T$ is balanced and contains at most $3k$ red vertices, it must contain at most $3k$ blue vertices, $2k$ of which must be $\{b_1,\ldots,b_{2k}\}$, in order that $T$ is connected.) 
		%
		%
		Next, we claim that, in fact, $t\geq k$. To see this, note that each of the $t$ blue vertices of $T$ that corresponds to a set in $S$ is connected by edges to $3$ red vertices; thus, $T$ has at most $3t$ red vertices. Now, $T$ has $2k+t$ blue vertices (since it has $t$ vertices other than the path $(b_1,\ldots,b_{2k})$), and $T$ is balanced; thus, $T$ has exactly  $2k+t$ red vertices, and we conclude that $2k+t\leq 3t$, implying $k\leq t$, as claimed. 
		%
		%
		Therefore, we need to select exactly $k$ blue vertices corresponding to the sets $S$, and these vertices connect to all $3k$ of the red vertices. The $k$ sets corresponding to these $k$ blue vertices is a solution for the \eset~problem. \qed
	\end{proof}
	
	It is easy to prove that the \bcs~existence problem is in \np. Hence, we have the following theorem.
	
	\begin{theorem}
		It is \np-complete to decide if there exists a connected balanced subgraph that contains a specific vertex.
	\end{theorem}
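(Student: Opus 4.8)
The plan is to establish the two halves of \np-completeness separately, using the reduction of Lemma~\ref{lemma-bcs-np-hard} for hardness and a routine certificate argument for membership. First I would pin down the decision problem (call it the \bcs-existence problem): an instance is a graph $G=(V_R\cup V_B,E)$ together with a designated vertex $v\in V$, and the question is whether there is a color-balanced $V'\subseteq V$ with $v\in V'$ whose induced subgraph $G[V']$ is connected.

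For membership in \np, I would use the vertex set $V'$ itself as the certificate; it has size at most $|V|$ and is therefore polynomial. The verifier performs three polynomial-time checks: (i) $v\in V'$; (ii) $|V'\cap V_R|=|V'\cap V_B|$, computed by one pass counting red and blue vertices of $V'$; and (iii) $G[V']$ is connected, checked by a breadth-first search from $v$ inside $G[V']$ and confirming that every vertex of $V'$ is reached. If all three checks succeed, accept; otherwise reject. This shows the \bcs-existence problem is in \np.

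For \np-hardness I would invoke the reduction $X\mapsto(G,b_1)$ constructed just above: $G$ has $O(m+k)$ vertices and $O(m+k)$ edges, so it is built in polynomial time, and by Lemma~\ref{lemma-bcs-np-hard} the \eset~instance $X$ is a yes-instance iff $(G,b_1)$ is a yes-instance of the \bcs-existence problem. Since \eset~is \np-complete~\cite{Garey1979}, this is a valid Karp reduction and the \bcs-existence problem is \np-hard. Combining this with the membership argument yields \np-completeness.

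I do not expect a genuine obstacle here: the only slightly delicate point is making sure the correctness of the reduction is fully accounted for, and that has already been handled by the ``only if'' direction of Lemma~\ref{lemma-bcs-np-hard} (the counting argument forcing exactly $k$ set-vertices to be selected, so that $b_1$ cannot be placed in a balanced connected subgraph unless $X$ admits an exact cover); the \np-membership step is entirely standard.
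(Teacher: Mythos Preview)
Your proposal is correct and follows exactly the paper's approach: \np-hardness via the reduction and Lemma~\ref{lemma-bcs-np-hard}, and \np-membership via the obvious certificate check (which the paper dismisses in one sentence as ``easy to prove''). Your write-up simply spells out the membership argument that the paper leaves implicit.
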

	
	\subsection{\np-hardness: balanced connected path problem}
	In this section we consider the balanced connected path (\bcp) Problem and prove that it is \np-hard. In this problem instead of finding a balanced connected subgraph, our goal is to find a balanced path with a maximum cardinality of vertices. To prove the \bcp~problem is \np-hard we give a polynomial time reduction from the \colb{Hamiltonian Path (Ham-Path)} problem which is known to be \np-complete \cite{Garey1979}. In this problem, we are given an undirected graph $Q$, and the goal is to find a Hamiltonian path in $Q$ i.e., a path which visits every vertex in $Q$ exactly once. In the reduction we generate an instance $G$ of the \bcp~problem from an instance $Q$ of the \hpath~problem as follows:
	
	\noindent  {\bf Reduction:} We make a new graph $Q'$ from $Q$. Let us assume that the graph $Q$ contains $m$ vertices. If $m$ is even then $Q'=Q$. If $m$ is odd, then we add a dummy vertex $u$ in $Q$ and connect to every other vertices in $Q$ by edges with $u$. The resulting graph is our desired $Q'$. It is easy to observe that, $Q$ has a Hamiltonian path if and only if $Q'$ has a Hamiltonian path.
	
	Now we have a \hpath~instance $Q'$ with even number of vertices, say $n$. We arbitrary choose any $n/2$ vertices in $Q'$ and color them red and color the remaining $n/2$ vertices blue. Let $G$ be the colored graph.
	
	This completes the construction. Clearly, this can be done in polynomial time. We now have the following lemma.
	
	\begin{lemma} $Q'$ has a Hamiltonian path $T$ if and only if $G$ has a balanced path $P$ with exactly $n$ vertices.
	\end{lemma}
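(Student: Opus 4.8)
The plan is to observe that this equivalence is essentially immediate from two facts built into the reduction: $Q'$ and $G$ share the same $n$-vertex set, and the coloring of $G$ was chosen so that exactly $n/2$ vertices are red and $n/2$ are blue. I would prove the two directions separately, and in both of them the ``balanced'' requirement will turn out to play no active role.

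For the forward direction, suppose $Q'$ has a Hamiltonian path $T$. Then $T$ visits every one of the $n$ vertices exactly once, so, viewed as a path in $G$, it has exactly $n$ vertices. Since the full vertex set of $G$ consists of exactly $n/2$ red and $n/2$ blue vertices, the vertex set of $T$ is automatically color-balanced. Hence $P=T$ is a balanced path in $G$ on exactly $n$ vertices.

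For the backward direction, suppose $G$ has a balanced path $P$ with exactly $n$ vertices. A path has no repeated vertices, and $G$ has only $n$ vertices in total, so $P$ must use all of them, each exactly once. Therefore $P$, read in the uncolored graph $Q'$, is precisely a Hamiltonian path of $Q'$.

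The only thing worth flagging is that there is essentially no obstacle here: the balance condition is free in the forward direction (it is forced by how we colored $G$) and irrelevant in the backward direction (any $n$-vertex path in an $n$-vertex graph is Hamiltonian no matter how the vertices are colored), so the lemma is a direct consequence of the construction. One should then just recall the earlier observation that $Q$ has a Hamiltonian path if and only if $Q'$ does; combining that remark with this lemma and the polynomial-time construction gives a reduction from \hpath~to the \bcp~problem, establishing that \bcp~is \np-hard.
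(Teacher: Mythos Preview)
Your proof is correct and follows essentially the same approach as the paper: both directions use that $G$ and $Q'$ share the same $n$-vertex set with exactly $n/2$ vertices of each color, so a Hamiltonian path is automatically balanced and any $n$-vertex path is automatically Hamiltonian. The additional remarks about the balance condition being free and the connection back to $Q$ are accurate and in the spirit of the surrounding text.
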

	
	\begin{proof} Assume that $Q'$ has a Hamiltonian path $T$. This implies that, $T$ visits every vertex in $Q'$. Since by the construction there are exactly half of the vertices in $G$ is red and remaining are blue, the same path $T$ is balanced with $n/2$ vertices of each color.
		
		On the other hand, assume that there is a balanced path $P$ in $G$ with  exactly $n/2$ vertices of each color. Since, $G$ has a total of $n$ vertices, the path $P$ visits every vertex in $G$. Hence, $P$ is a Hamiltonian path. \qed
	\end{proof}
	
	Therefore, we have the following theorem.
	
	\begin{theorem}
		\bcp~problem is \np-hard for general graph.
	\end{theorem}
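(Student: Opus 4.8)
The plan is to assemble the theorem from the two ingredients already in place: the polynomial-time transformation $Q \mapsto Q' \mapsto G$, and the equivalence proved in the preceding lemma. First I would record that the Hamiltonian Path problem on undirected graphs is \np-complete \cite{Garey1979}, and that the construction above runs in polynomial time: we add at most one dummy vertex $u$ (adjacent to all of $V(Q)$) when $|V(Q)|$ is odd to obtain $Q'$ with an even number of vertices $n$, and then we fix an arbitrary partition of $V(Q')$ into $n/2$ red and $n/2$ blue vertices to obtain $G$. Both steps clearly take time polynomial in $|V(Q)|$, and it was already observed that $Q$ has a Hamiltonian path iff $Q'$ does.

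Next I would invoke the lemma, which states that $Q'$ has a Hamiltonian path iff $G$ has a balanced path on exactly $n$ vertices. Chaining this with the $Q \leftrightarrow Q'$ equivalence gives: $Q$ has a Hamiltonian path if and only if $G$ has a balanced path on $n$ vertices. To pass from this decision statement to hardness of the optimization problem \bcp, I would note that any path in $G$ has at most $n$ vertices and any balanced path has at most $n/2$ vertices of each color, so a maximum-cardinality balanced path in $G$ has $n$ vertices precisely when such a spanning balanced path exists. Hence an algorithm computing a maximum balanced path in $G$ decides Hamiltonicity of $Q$, and \bcp~is \np-hard on general graphs.

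The only subtle point — already discharged inside the lemma's proof — is the forward direction: a balanced path is not a priori forced to visit every vertex, so one must check that in this instance a balanced path of the prescribed size is necessarily spanning. Since the prescribed size $n$ equals the total number of vertices of $G$, this is immediate. The one piece of genuine care in the reduction is the parity fix via the dummy vertex $u$, which is needed so that $Q'$ has an even vertex count and therefore admits an exactly balanced $2$-coloring; everything else is routine. \qed
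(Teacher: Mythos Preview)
Your proposal is correct and follows essentially the same approach as the paper: both derive the theorem directly from the polynomial-time reduction $Q\mapsto Q'\mapsto G$ and the preceding lemma equating a Hamiltonian path in $Q'$ with a balanced path on $n$ vertices in $G$. Your write-up is in fact more explicit than the paper's, which simply states the theorem as an immediate consequence of the lemma without spelling out the optimization-to-decision step.
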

	
	\section{Algorithmic results}\label{algorithmicresults}
	
	
	In this section, we consider several graph families and devise polynomial time algorithms for the 
	\bcs~problem. 
	Notice that, if the graph is a path or cycle, the optimal solution is just a path. Hence, one can do brute-force search to obtain the maximum balanced path.  
	In case of a complete graph $K_n$, we output a sub-graph $H$ of $K_{n}$ induced by $V$, 
	where $|V|= 2|B|$, $B \subset V$, and $B$ is the set of all blue vertices in $K_{n}$ (assuming that, the number of blue vertices is at most the number of red vertices in $K_{n}$). 
	Clearly, $H$ is the maximum-cardinality balanced sub-graph in $K_{n}$.     
	We consider trees, split graphs, bipartie graphs (properly colored),
	graphs of diameter~$2$, and present polynomial algorithms for each of them. 
	
	
	\subsection{Trees}
	In this section we give a polynomial time algorithm for the \bcs~problem where the input graph is a tree. We first consider the following problem.
	
	\noindent {\bf Problem 1:} Given a tree $T=(V,E)$, and a root $t \in V$ where $V=V_R\cup V_B$. The vertices in $V_R$ and $V_B$ are colored red and blue,respectively. 
	The objective is to find maximum balanced tree with root $t$.
	
	We now design an algorithm to solve this problem. Let $v$ be a vertex in $G$. We associate a set $P_v$ of \colb{pairs} of the form $(r,b)$ to $v$, where $r$ is the count of red vertices and $b$ is the count of blue vertices.  A single pair  $(r,b)$ associated with vertex $v$ indicates that there is a  subtree rooted at $v$ having $r$ red and $b$ blue vertices. Note that $r$ may not be equal to $b$. Now for any $k$ pairs, the sum is also a pair which is defined as the element-wise sum of these $k$ pairs. Let $A_1,A_2, \ldots,A_k$ be $k$ sets. The Minkowski sum $^M\sum_{i=1}^k A_i$ denotes the set of sums of $k$ elements one from each set $A_i$ i.e., $^M\sum_{i=1}^k A_i =  A_1 \oplus A_2 \oplus  \ldots \oplus A_k$. We use $\oplus$ to denote Minkowski sum between sets. For example, for the Minkowski sum of the sets $A$ and $B$, we write $A \oplus B$ and it means $A \oplus B=\{a+b \colon a\in A, b\in B\}$.
	
	Now we are ready to describe the algorithm to solve Problem 1. In Algorithm 1, we describe how to get maximum balanced subtree with root $t$ for a  tree $T$ rooted at $t$. 
	
	\IncMargin{1em}
	\begin{algorithm}[H]
		
		\SetKwData{Left}{left}\SetKwData{This}{this}\SetKwData{Up}{up}
		\SetKwFunction{Union}{Union}\SetKwFunction{FindCompress}{FindCompress}
		\SetKwInOut{Input}{Input}\SetKwInOut{Output}{Output}
		\Input{$(i)$ A rooted tree  $T=(B \cup R,E)$ with root $t$.\\ $(ii)$ $B$ and $R$ are colored blue and red respectively.}
		\Output{A set of pairs at each node in $T$.}
		\BlankLine
		
		\If{$v$ is a leaf with red color}{
			{\ $P_{v}=\{(0,0),(1,0)\}$;}
		}
		\If{$v$ is a leaf with blue color}{
			{\ $P_{v}=\{(0,0),(0,1)\}$;}
		}
		\If{$v$ be a vertex with red color and $v$ has $k$ children $u_{1}, u_{2},...,u_{k}$ in $T$ with root at $r$,}{
			{\ $P_{v} = \{(0,0)\} \cup  \{ ^M\sum_{i=1}^k P_{u_{i}} \oplus \{(1,0)\} \}$;}
		} 
		\If{$v$ be a vertex with blue color and $v$ has $k$ children $u_{1}, u_{2},...,u_{k}$ in $T$ with root at $r$,}{
			{\ $P_{v} = \{(0,0)\} \cup  \{ ^M\sum_{i=1}^k P_{u_{i}} \oplus \{(0,1)\} \}$}\tcp*[r]{\ \colb{ \color{blue} Here $\oplus$ denotes Minkowski Set sum.}}
		}
		\Return {$P_{t}$}
		\caption{ Construct red-blue pair-sets in a rooted tree.}\label{algo_1}
	\end{algorithm}\DecMargin{1em}

	In Algorithm \ref{algo_1}  we compute a finite set $P_{t}$ of pairs $\{(r,b)\}$ at the root $t$ in $T$. To do so, we recursively calculate the set of pairs from leaf to the root. For an internal vertex $v$, the set $P_v$ is calculated as follows: let the color of $v$ is red and it has $k$ children $u_1,u_2,\ldots,u_k$. Then, $P_{v} = \{(0,0)\} \cup \{^M\sum_{i=1}^{k} P_{u_i} \oplus \{(1,0)\}\}$.

	We now prove the following lemma.
	\begin{lemma}
		Let $T$ be rooted tree with $t$ as a root. Then Algorithm \ref{algo_1} produces all possible balanced subtree rooted at $t$ in $O(n^6)$ time.
	\end{lemma}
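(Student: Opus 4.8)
The plan is to establish two things separately: \emph{correctness} of Algorithm~\ref{algo_1} — namely that at every vertex $v$ the computed set $P_v$ is exactly the set of \emph{realizable} pairs at $v$, where $(r,b)$ is realizable at $v$ if either $(r,b)=(0,0)$ or there is a connected subtree of the subtree $T_v$ of $T$ rooted below $v$ that contains $v$ and has $r$ red and $b$ blue vertices — and the \emph{running-time bound}. Granting the first part and applying the algorithm at the root $t$, the set $P_t$ lists all pairs realizable at $t$, hence in particular every balanced pair $(m,m)$ attained by some connected subtree rooted at $t$; scanning $P_t$ for the largest such $m$ then solves Problem~1, which is the point of the lemma.

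For correctness I would induct on the height of $v$ in $T$. In the base case $v$ is a leaf, so the only subtrees of $T_v=\{v\}$ containing $v$ are $\emptyset$ and $\{v\}$; if $v$ is red this gives $\{(0,0),(1,0)\}$ and if blue $\{(0,0),(0,1)\}$, matching the two leaf cases of the algorithm. For the inductive step, say $v$ is red with children $u_1,\dots,u_k$. The key structural fact is that every connected subtree $S$ of $T_v$ containing $v$ arises \emph{uniquely} by choosing, independently for each $i$, a connected subtree $S_i$ of $T_{u_i}$ that is either empty or contains $u_i$, and setting $S=\{v\}\cup\bigcup_i S_i$; conversely each such choice yields a connected subtree of $T_v$ containing $v$. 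Therefore the set of red/blue count-pairs of all nonempty such $S$ equals $\{(1,0)\}\oplus\bigoplus_{i=1}^{k}\big(\{(0,0)\}\cup\{\text{realizable pairs at }u_i\}\big)$, which by the induction hypothesis (each $P_{u_i}$ contains $(0,0)$, so the ``take nothing from $u_i$'' option is built in) equals $\{(1,0)\}\oplus\,^M\!\sum_{i=1}^k P_{u_i}$; adding back the empty subtree $(0,0)$ reproduces exactly the formula $P_v=\{(0,0)\}\cup\{\,^M\!\sum_{i=1}^k P_{u_i}\oplus\{(1,0)\}\}$ used by Algorithm~\ref{algo_1}. The blue case is identical with $(1,0)$ replaced by $(0,1)$, closing the induction.

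For the running time, first bound $|P_v|$: any $(r,b)\in P_v$ has $r,b\ge 0$ and $r+b\le|T_v|\le n$, so $P_v$ has $O(n^2)$ distinct pairs provided we keep it deduplicated, e.g.\ via a two-dimensional boolean table indexed by $(r,b)\in\{0,\dots,n\}^2$. The Minkowski sum over the children is computed incrementally, one child at a time; every partial sum is again a set of pairs bounded componentwise by $n$, hence of size $O(n^2)$, and one pairwise step $A\oplus B$ with $|A|,|B|=O(n^2)$ costs $O(n^4)$ (form all $O(n^4)$ componentwise sums, record them in the table, read the table off). The number of pairwise steps at $v$ is $O(k_v)$ with $k_v$ its number of children, and $\sum_v k_v=n-1$, so over the whole tree the cost is $O(n)\cdot O(n^4)=O(n^5)$; the shift by $\{(1,0)\}$ or $\{(0,1)\}$, the union with $\{(0,0)\}$, and clearing the table between steps are all lower order. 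Since $O(n^5)\subseteq O(n^6)$, the stated bound follows (and a cruder accounting that charges each of the $O(n)$ vertices up to $n$ Minkowski steps of cost $O(n^4)$ gives exactly $O(n^6)$).

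The main obstacle is the structural lemma inside the induction: one must check that ``connected subtree of $T_v$ containing $v$'' decomposes into an independent choice over the children with no interference, and that the ``omit this child'' alternative is faithfully represented by the $(0,0)$ already sitting in each $P_{u_i}$, so that the Minkowski sum enumerates precisely the realizable pairs and nothing spurious. Once this decomposition is pinned down, both the correctness induction and the size bound $|P_v|=O(n^2)$ that drives the time analysis are routine.
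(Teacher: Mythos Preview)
Your proposal is correct and follows essentially the same approach as the paper: an induction up the tree for correctness (yours is considerably more explicit about the structural decomposition over children and the role of $(0,0)$ in each $P_{u_i}$, which the paper only sketches), and the running-time analysis via $|P_v|=O(n^2)$ plus $O(n^4)$ per binary Minkowski step. Your sharper $O(n^5)$ accounting using $\sum_v k_v=n-1$ actually anticipates the paper's own refinement, while the cruder per-vertex charge you mention is exactly how the paper arrives at $O(n^6)$.
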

	
	\begin{proof}
		Notice that in Algorithm \ref{algo_1}, at each node $v\in T$, we store a set $P_{v}$ of pairs $\{(r_{i},b_{i})\}$, where each $(r_{i},b_{i})$ indicates that there exists a subtree $T'$ with root $v$ such that number of red and blue vertices in $T'$ are $r_{i}$ and $b_{i}$, respectively. Note that $r_{i}$ may not be same as $b_{i}$. When we construct the set $P_v$, all the sets corresponding to its children are already calculated. Finally, in steps $6$ and $8$ of Algorithm \ref{algo_1} we calculate the set $P_v$ based on the color of $v$. Hence, when Algorithm \ref{algo_1} terminates, we get the set $P_t$ where $t$ is the root of $T$.
		
		Now we calculate the time taken by Algorithm \ref{algo_1}. Clearly, steps $2$ and $4$ take $O(1)$ time to construct the $p_v$ when $v$ is a leaf. Note that, the size of $P_v$, for an internal node $v$ is $O(n^2)$. Since there are at most $n$ blue and red vertices in the subtree rooted at $v$. If $ v $ has $k$ children then we have to take Minkowski sum of the sets corresponds to the children of $v$. To get the sum of two sets it takes $ O(n^4) $ time. As there are at most $ n $ children of node $ v $, so the time taken by steps 6 and 8 are $O(n^5)$. Finally, we traverse  the tree from bottom to the root. Hence, the total time taken by the algorithm is $O(n^6)$. \qed
	\end{proof}

	We can now improve the time complexity by slightly modifying the Algorithm \ref{algo_1}. For an internal vertex $v$, we actually don't need all the pairs to get the maximum balanced subtree. Suppose there are two pairs $(a,b)$ and $(c,d)$ in $P_{v}$, where $(b-a)=(d-c)$ and $a<c$. Then, instead of using the subtree with pair $(a,b)$, it is better to use the subtree with pair $(c,d)$, since it may help to construct a larger balance subtree. Therefore, in a set $P_{v}$ if there are $k$ pairs $\{(a_{i},b_{i}); 1 \leq i \leq k \}$ such that $(b_{i}-a_{i})=(b_{j}-a_{j})$ whenever $i \neq j$, $1 \leq i,j \leq k$. Then we remove the $(k-1)$ pairs and store only the pair which is largest among all these $k$ pairs. We say $(a_{m},b_{m})$ is largest when $a_{m} > a_{i}$ and $b_{m}> b_{i}$ for $1 \leq i \leq k, i \neq m$. So we reduce the size of $P_{v}$ for each vertex $v \in T$ from $ O(n^2) $ to $O(n)$.  Let $T(n)$ be the time to compute red-blue pairset for the root vertex $t$ in the tree $T$ with size $n$. If $r$ has $k$ children $u_{1}, u_{2},...,u_{k}$ with size $n_{1},n_{2},...,n_{k}$. Then the recurrence is $T(n)= T(n_{1})+T({n_{2})+...+T(n_{k})+O(\sum_{i=1}^{k-1} (n_{1}+ n_{2}+ \dots + n_{i})n_{i+1})}$. Now $ \sum_{i=1}^{k-1} (n_{1}+ n_{2}+ \dots n_{i})n_{i+1} \leq  \sum_{i=1}^{k-1} nn_{i+1} = n  \sum_{i=1}^{k-1} n_{i+1}  \leq n^2$. 
	which gives the solution that $T(n)= O(n^3)$. 
	
	Hence, we conclude the following lemma. 
	\begin{lemma}
		Let $T$ be rooted tree with $t$ as a root. We can produces all possible balanced subtree rooted at $t$ in $O(n^3)$ time and 
		$ O(n^2) $ space complexity.
	\end{lemma}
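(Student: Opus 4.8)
The plan is to prove the claim in two halves: a \emph{correctness} half (the pruned pair-sets still capture every balanced subtree rooted at $t$) and a \emph{complexity} half ($O(n^3)$ time, $O(n^2)$ space). The modification to Algorithm~\ref{algo_1} is to replace each computed set $P_v$ by its \emph{dominant} subset: group the pairs by their imbalance $\delta(r,b)=b-r$, and within each imbalance class keep only the component-wise largest pair (maximum $r$, equivalently maximum $b$). Since a rooted subtree has between $0$ and $n$ vertices, $\delta$ takes $O(n)$ integer values, so each pruned $P_v$ has size $O(n)$. I would first make precise the sense in which this is lossless. For a vertex $v$, let $Q_v$ be the \emph{true} (unpruned) set of achievable pairs, and let $\mathrm{Dom}(Q_v)$ be its Pareto frontier under the partial order ``equal imbalance and coordinate-wise $\le$.'' The statement I would establish is the invariant $P_v=\mathrm{Dom}(Q_v)$ at every node. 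At the root this says $P_t$ retains, for each achievable imbalance, the largest realizable pair, so it records every balanced subtree up to this dominance; in particular it exhibits all Pareto-maximal balanced configurations (those with $\delta=0$), the maximum balanced subtree among them. This is the precise sense in which the modified algorithm ``produces all possible balanced subtrees'' — it outputs a compressed but dominance-complete representation of them.

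The main work, and the step I expect to be the real obstacle, is showing that pruning \emph{commutes} with the two operations in Algorithm~\ref{algo_1}: taking a Minkowski sum over children, then prepending the root. I would prove this by structural induction from the leaves, whose sets are already dominant. The inductive step rests on two monotonicity facts. First, imbalance is additive under $\oplus$ and coordinates add, so a pair of imbalance $d$ in $P_{u_1}\oplus\cdots\oplus P_{u_k}$ arises by choosing, in each child, a pair of imbalance $d_i$ with $\sum_i d_i=d$; since coordinates are maximized independently within each class, the coordinate-wise largest pair of total imbalance $d$ is obtained by combining the coordinate-wise largest pairs of the selected $d_i$'s. Hence $\mathrm{Dom}(A\oplus B)=\mathrm{Dom}(\mathrm{Dom}(A)\oplus \mathrm{Dom}(B))$, which licenses re-pruning after every pairwise sum rather than only at the end. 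Second, adding the root maps each pair $(r,b)\mapsto(r+1,b)$ or $(r,b+1)$; this is an order-isomorphism onto its image (it shifts imbalance by a fixed $\pm1$, determined by the colour of the root, and translates coordinates), so it carries dominant sets to dominant sets, while the always-available empty-subtree pair $(0,0)$ is treated as an ordinary imbalance-$0$ pair and kept only when it is dominant. Combining both facts yields $P_v=\mathrm{Dom}(Q_v)$. The delicate point is the multi-child case: I must argue that optimizing each child independently within a fixed imbalance split, and then optimizing over the $O(n)$ feasible splits, is equivalent to optimizing over the full (exponentially large) product — this is exactly where additivity of imbalance and coordinate-wise monotonicity of $\oplus$ are needed.

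For the running time I follow the recurrence indicated in the text. At a node $v$ with children $u_1,\dots,u_k$ of subtree sizes $n_1,\dots,n_k$, I combine the children left to right; the accumulated set after $i$ children has imbalance range $O(n_1+\cdots+n_i)$, hence size $O(n_1+\cdots+n_i)$, and Minkowski-summing it with the next child's set of size $O(n_{i+1})$ and re-pruning costs $O\big((n_1+\cdots+n_i)\,n_{i+1}\big)$. This gives
\[
T(n)=\sum_{i=1}^{k}T(n_i)+O\!\left(\sum_{i=1}^{k-1}(n_1+\cdots+n_i)\,n_{i+1}\right),
\]
and bounding $(n_1+\cdots+n_i)\le n$ makes the per-node combination cost $O(n^2)$, so the recurrence solves to $T(n)=O(n^3)$; prepending the root and re-pruning costs only $O(n)$ and is absorbed. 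For space, storing one pruned set of size $O(n)$ at each of the $n$ vertices gives $O(n^2)$ total. Thus the modified algorithm produces the dominant representation of all balanced subtrees rooted at $t$ within $O(n^3)$ time and $O(n^2)$ space, as claimed.
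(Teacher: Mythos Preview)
Your proposal is correct and follows essentially the same approach as the paper: prune each $P_v$ by keeping, within every imbalance class $b-r$, only the coordinate-wise maximal pair (so $|P_v|=O(n)$), then combine children left to right and analyse the same recurrence $T(n)=\sum_i T(n_i)+O\big(\sum_{i=1}^{k-1}(n_1+\cdots+n_i)n_{i+1}\big)\le O(n^2)+\sum_i T(n_i)$ to obtain $O(n^3)$ time and $O(n^2)$ space. Your write-up is in fact more careful than the paper's on the correctness side: the paper only asserts informally that the larger pair ``may help to construct a larger balanced subtree,'' whereas you explicitly verify the commutation identity $\mathrm{Dom}(A\oplus B)=\mathrm{Dom}(\mathrm{Dom}(A)\oplus\mathrm{Dom}(B))$ and the order-preservation of the root-prepend step, which is exactly what is needed to justify re-pruning after every pairwise sum.
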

	
	\subsection*{Optimal solution for BCS problem in tree}
	If there are $n$ nodes in the tree $T$, then, for each node $v_{i}, 1 \leq i \leq n$, we consider $T$ to be a tree rooted at $v_{i}$. We then apply Algorithm $\ref{algo_1}$ to find maximum-cardinality balanced subtree rooted at $v_{i}$; let $T_i$ be the resulting balanced subtree, having $m_{i}$ vertices of each color. Then, to obtain an optimal solution for the \bcst~problem in $T$ we choose a balanced subtree that has $max \{m_i; 1 \leq i \leq n\}$ vertices of each color. Now we can state the following theorem.
	\begin{theorem}
		Let $T$ be a tree whose $n$ vertices are colored either red or blue. Then, in $O(n^4)$ time and $O(n^2)$ space, one can compute a maximum-cardinality balanced subtree of~$T$.
	\end{theorem}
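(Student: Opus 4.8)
The plan is to reduce the optimization over all balanced subtrees of $T$ to $n$ independent invocations of the rooted procedure behind Problem~1, and then take the best of the $n$ answers. The structural fact I would use is that every connected induced subgraph of a tree is again a tree, i.e.\ a subtree of $T$, and that in any fixed rooting of $T$ such a subtree has a unique vertex closest to the root. Consequently, if $T^\star$ is a maximum-cardinality balanced subtree and $v$ is any one of its vertices, then re-rooting $T$ at $v$ turns $T^\star$ into a subtree rooted at $v$. Hence the optimum value of the \bcs~problem on $T$ equals
\[
\max_{1\le i\le n}\; m_i ,
\]
where $m_i$ is the number of vertices of each colour in a maximum-cardinality balanced subtree of $T$ rooted at $v_i$ (with $m_i=0$ when no balanced subtree is rooted at $v_i$).

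For each $i$ I would root $T$ at $v_i$ and run the pruned version of Algorithm~\ref{algo_1}, obtaining the reduced pair-set $P_{v_i}$. Since after pruning $P_{v_i}$ retains, for each attainable value of the difference $b-a$, exactly the componentwise-largest attainable pair, the unique pair of $P_{v_i}$ with $b-a=0$ is precisely $(m_i,m_i)$, which is read off in $O(n)$ time. I then output the subtree realizing $\max_i m_i$. To return the subtree itself rather than merely its size, I would keep with each surviving pair a back-pointer to the child pairs that generated it (or, equivalently, perform one extra top-down traversal); this adds only $O(n^2)$ space and $O(n)$ time per root and does not change the stated bounds.

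Correctness is then immediate from the structural observation above together with the correctness of Algorithm~\ref{algo_1} proved in the preceding lemmas: for the chosen root the algorithm produces every balanced subtree rooted there, hence in particular a largest one, so the global maximum is indeed attained at some $v_i$. For the resource bounds, each of the $n$ rooted calls runs in $O(n^3)$ time and $O(n^2)$ space by the preceding lemma; since the tables built for $v_i$ can be discarded before processing $v_{i+1}$, the total time is $O(n^4)$ and the total space remains $O(n^2)$.

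The substantive work has already been carried out in the lemmas about Algorithm~\ref{algo_1} — the Minkowski-sum recursion and, crucially, the pruning rule that shrinks each $P_v$ from $\Theta(n^2)$ to $O(n)$ pairs while still retaining the best balanced pair. What remains here is light: one must check that restricting attention to subtrees "rooted at some vertex" loses nothing (clear, since every subtree has a root in every rooting) and that the pruning never discards the globally best balanced pair (clear, since it keeps the largest pair of each difference, in particular of difference $0$). I expect the only mildly non-automatic point to be the bookkeeping needed to reconstruct the witnessing subtree, which is why I build the back-pointers into the computation from the outset.
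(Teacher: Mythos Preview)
Your proposal is correct and follows essentially the same approach as the paper: root $T$ at each of its $n$ vertices in turn, invoke the $O(n^3)$/$O(n^2)$ pruned version of Algorithm~\ref{algo_1} to obtain the best balanced subtree rooted there, and return the overall best. Your write-up is in fact more careful than the paper's, which states the rooting-at-every-vertex idea without the explicit justification that every subtree is rooted in some rooting and without discussing reconstruction via back-pointers.
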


	\subsection{Split graphs}
	A graph $G = (V, E)$ is defined to be a split graph if there is a
	partition of $V$ into two sets $S$ and $K$ such that $S$ is an independent set and $K$ is a complete graph. There is no restriction on edges between vertices of $S$ and $K$.
	Here we give a polynomial time algorithm for the \bcs~problem where the input graph $G=(V,E)$ is a split graph. Let $S$ and $K$ be two disjoint partition of $V$ where $S$ is an independent set and $K$ is a complete graph. Also, let $S_B$ and $S_R$ be the sets of blue and red vertices in $S$, respectively. Similarly, let $K_B$ and $K_R$ be the sets of blue and red vertices in $K$, respectively. We argue that there exists a balanced connected subgraph in $G$, 
	having $\min\{|S_B\cup K_B|,|S_R\cup K_R|\}$ vertices of each color.
	
	Note that if $|S_{B}\cup K_{B}|=|S_{R}\cup K_{R}|$ then $G$ itself is balanced. 
	Now, w.l.o.g., we can assume that $|S_{B}\cup K_{B}|<|S_{R}\cup K_{R}|$. 
	We will find a connected balanced subgraph $H$ of $G$, where the number of vertices in $H$ is exactly $2|S_{B}\cup K_{B}|$. To do so, we first modify the graph $G=(V,E)$ to a graph $G'=(V,E')$. Then, from $G'$, we will find the desired balanced subgraph with $|S_B\cup K_B|$ many vertices of each color. Moreover, this process is done in two steps. 
	
	\begin{description}
		\item[Step~1:] Construct $G'=(V,E')$ from $G=(V,E)$.\\
		For each $u \in S_{B}$, if $u$ is adjacent to at least a vertex $u'$ in $K_{R}$, then remove all adjacent edges with $u$ except the edge $(u,u')$. Similarly, for each $v \in S_{R}$, if $v$ is adjacent to at least a vertex $v'$ in $K_{B}$, then remove all adjacent edges with $v$ except the edge $(v,v')$.
		
		\vspace{.2cm}
		
		\item[Step~2:]
		Delete $|S_{R}\cup K_{R}|-|S_{B}\cup K_{B}|$ vertices from $G'$.\\
		Let $k= |S_{R}\cup K_{R}|-|S_{B}\cup K_{B}|$. Now we we have following cases.

		\begin{description}
			\item[Case 1:] $|S_{R}| \geq k$. We remove $k$ vertices from $S_{R}$ in $G'$. Clearly, after this modification, $G'$ is connected, and we get a balanced subgraph having $|S_B\cup K_B|$ vertices of each color.
			\item[Case 2:] $|S_{R}|< k$. 	
			Then we know, $|K_{R}| > |K_{B}\cup S_{B}|$.
			Let $S'_{B} \subseteq S_{B}$ be the set of vertices in $G'$ such that each vertex of $S'_{B}$ has exactly one neighbor in $K_R$. 
			Then, we take a set $ X\subset K_{R} $ with cardinality $|K_{B} \cup S_{B}|$ such that $X$ contains all adjacent vertices of $S'_{B}$.  
			Now we take the subgraph $H$ of $G'$ induced by 
			$(S_{B} \cup K_{B} \cup X )$. $H$ is optimal and balanced.
		\end{description}
		
	\end{description}

	\vspace{.2cm}
	
	\noindent {\bf Running time:} Step 1 takes $O(|E|)$ time to construct $G'$ from $G$. Now in step~2, both  Case~1 and Case~2 take $O(|V|)$ time to delete $|S_{R}\cup K_{R}|-|S_{B}\cup K_{B}|$ vertices from $G'$. Hence, the total time taken is $O(n^2)$, where $n$ is the number of vertices in $G$. We conclude in the following theorem.
	
	\begin{theorem}
		Given a split graph $G$ of $n$ vertices, with $r$ red and $b$ blue ($n=r+b$) vertices, then, in $O(n^2)$ time we can find a balanced connected subgraph of $G$ having $\min \{b,r\}$ vertices of each color.
	\end{theorem}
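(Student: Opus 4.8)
The plan is to establish the existence claim by carrying out the two‑step construction described above and then verifying the three facts that make it work: that the rewired graph $G'$ is connected, that after deleting the surplus red vertices one is left with a connected balanced subgraph, and that everything runs in $O(n^2)$ time. Throughout I would assume without loss of generality that $|S_B\cup K_B|\le|S_R\cup K_R|$, write $b=|S_B\cup K_B|$ and $r=|S_R\cup K_R|$, and dispose immediately of the trivial cases $b=r$ (take $V'=V$, which is connected by hypothesis) and $b=0$. Since $G$ has only $b$ blue vertices, no balanced subgraph can contain more than $b$ vertices of each colour, so $\min\{b,r\}=b$ is an upper bound and it suffices to exhibit a connected subgraph on a vertex set with exactly $b$ red and $b$ blue vertices; note that if some subgraph $H$ of $G$ on a vertex set $V'$ is connected then $G[V']\supseteq H$ is connected as well, so it is harmless to work inside the edge‑reduced graph $G'$.

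First I would prove the key property of Step~1: \emph{$G'$ is connected, every vertex of $S_B$ has at most one neighbour in $K_R$ in $G'$, and symmetrically every vertex of $S_R$ has at most one neighbour in $K_B$}. The degree statements are immediate from the definition of the rewiring. For connectivity, observe that Step~1 deletes only edges with exactly one endpoint in $S$, so the clique $K$ is untouched and remains connected; and every $u\in S_B$ either had a neighbour in $K_R$, in which case it keeps exactly one edge to $K$, or had all its $G$‑neighbours in $K_B$, in which case Step~1 does not touch $u$ and (as $u$ is not isolated in the connected graph $G$) it keeps an edge to $K$. The same holds for vertices of $S_R$ and for untouched $S$‑vertices, so every vertex still reaches $K$ and $G'$ is connected. (If $n\ge 2$ then $K\ne\emptyset$, since otherwise $V=S$ is independent and $G$ would be disconnected.)

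Next I would treat Step~2 with $k=r-b>0$. In \textbf{Case~1} ($|S_R|\ge k$) delete any $k$ vertices of $S_R$ from $G'$: removing a vertex $v$ of the independent set $S$ cannot disconnect $G'$, because any path $w\text{–}v\text{–}w'$ through $v$ has $w,w'\in K$ and can be short‑circuited by the edge $ww'\in E(K)$; the result has $b$ blue and $r-k=b$ red vertices. In \textbf{Case~2} ($|S_R|<k$) we have $|K_R|=r-|S_R|>r-k=b$. Let $S'_B\subseteq S_B$ be the vertices having a (by Step~1, unique) neighbour in $K_R$ in $G'$ and let $N\subseteq K_R$ be the set of these neighbours, so $|N|\le|S'_B|\le|S_B|\le b<|K_R|$; hence there is $X\subseteq K_R$ with $|X|=b$ and $N\subseteq X$. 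Put $H=G[S_B\cup K_B\cup X]$, which has $b$ vertices of each colour. It is connected: $K_B\cup X$ is a nonempty clique; each vertex of $S'_B$ is adjacent to its neighbour in $N\subseteq X$; and each vertex of $S_B\setminus S'_B$ was untouched by Step~1, so all its neighbours lie in $K_B$ and at least one of them does, so it attaches to $K_B\subseteq H$.

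Finally, Step~1 inspects each edge once, costing $O(|E|)=O(n^2)$, and Step~2 deletes at most $n$ vertices and selects $X$ in $O(n)$ time, for a total of $O(n^2)$. I expect the only point needing genuine care to be the connectivity arguments in the degenerate configurations, in particular $K_B=\emptyset$: then every blue vertex of $S$ must lie in $S'_B$ (otherwise it would be isolated), so all the ``bridge'' vertices that $S_B$ needs in $K_R$ must be captured inside $X$, and it is precisely the budget $|X|=b\ge|S_B|\ge|N|$ that makes this possible.
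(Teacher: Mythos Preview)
Your proposal is correct and follows essentially the same two-step construction as the paper (rewire $S$-vertices to keep a single cross-colour edge into $K$, then delete surplus red vertices via the same $|S_R|\ge k$ / $|S_R|<k$ case split with the set $X\subseteq K_R$). You supply considerably more detail than the paper does on the connectivity verifications and on the degenerate case $K_B=\emptyset$, but the underlying argument is the same.
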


	\subsection{Bipartite graphs, properly colored}  
	In this section, we describe a polynomial-time algorithm for the  \bcs~problem where the input graph is a bipartite graph whose nodes are colored red/blue according to proper 2-coloring of vertices in a graph. We show that there is a balanced connected subgraph of $G$ having $\min \{b,r\}$ vertices of each color where $G$ contains $r$ red vertices and $b$ blue vertices. Note that we earlier showed that the \bcs~problem is NP-hard in bipartite graphs whose vertices are colored red/blue arbitrarily; here, we insist on the coloring being a proper coloring (the construction in the hardness proof had adjacent pairs of vertices of the same color). 
	We begin with the following lemma.
	
	\begin{lemma}\label{lemma-2colorable}
		Consider a tree $T$ (which is necessarily bipartite) and a proper $2$-coloring of its nodes, with $r$ red nodes and $b$ blue nodes.  If $r<b$, then $T$ has at least one blue leaf.
	\end{lemma}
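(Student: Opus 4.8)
The plan is to argue by contradiction via a double-counting of edges that exploits the bipartite structure forced by the proper $2$-coloring. Suppose, for contradiction, that every leaf of $T$ is red. Since the coloring is proper, every edge of $T$ joins a red node to a blue node, so the two color classes are exactly the two sides of the (unique) bipartition of the tree. In particular each edge is incident to exactly one blue node, hence $|E(T)| = \sum_{v \text{ blue}} \deg_T(v)$.

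Next I would use the contradiction hypothesis to bound the right-hand side from below. If no leaf is blue, then every blue node is an internal vertex and therefore has degree at least $2$ in $T$, so $\sum_{v \text{ blue}} \deg_T(v) \ge 2b$. On the other hand, $T$ is a tree on $r+b$ vertices, so $|E(T)| = r+b-1$. Combining the displayed identity with these two facts gives $r + b - 1 \ge 2b$, i.e. $r \ge b + 1$, which contradicts the hypothesis $r < b$. Hence $T$ must have at least one blue leaf.

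I expect there to be no real obstacle here — it is a one-line counting argument once the bipartite identity $|E| = \sum_{v\text{ blue}}\deg(v)$ is noted. The only minor point to state carefully is that a proper $2$-coloring of a (connected) tree coincides with its bipartition, so that ``red'' and ``blue'' are precisely the two parts; after that the degree-sum inequalities are routine. Degenerate cases (for instance a single-vertex tree) are absorbed automatically, since there the contradiction hypothesis already fails and the conclusion holds trivially.
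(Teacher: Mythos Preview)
Your proof is correct and is actually cleaner than the paper's. The paper also argues by contradiction but proceeds via a rooted level-by-level argument: it roots $T$ at some blue vertex, observes that even levels are blue and odd levels are red, argues that the deepest level must be odd (otherwise its vertices would be blue leaves), and then notes that within each consecutive pair of levels $(2i,2i+1)$ the number of blue vertices is at most the number of red vertices (since each blue vertex, not being a leaf, has at least one child), summing to $b\le r$. Your double-counting of edges via $|E(T)|=\sum_{v\text{ blue}}\deg_T(v)\ge 2b$ together with $|E(T)|=r+b-1$ bypasses the rooting and level bookkeeping entirely and reaches the contradiction in one step; it even gives the slightly sharper $r\ge b+1$, though either inequality suffices.
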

	\begin{proof}
		We prove it by contradiction. Let there is no blue leaf. Now assign any blue node say $b_{r}$ as a root. Note that it always exists. Now $b_{r}$ is at level $0$ and $b_{r}$ has degree at least $2$. Otherwise, $b_{r}$ is a leaf with blue color. We put all the adjacent vertices of $b_{r}$ in level~$1$. This level consists of only red vertices. In level~$2$ we put all the adjacent vertices of level~$1$. So level~$2$ consists of only blue vertices. This way we traverse all the vertices in $T$ and let that we stop at $k^{th}$-level. $k$ cannot be even as all the vertices in even level are blue. So $k$ must be odd. Now for each $0 \leqslant i \leqslant \frac{k-1}{2}$, in the vertices of (level $2i$ $\cup$ level $(2i+1)$), number of blue vertices is at most the number of red vertices. Which leads to the contradiction that $r<b$. Hence there exists at least one leaf with blue color. \qed
	\end{proof}
	
	We are now ready to describe the algorithm. We first find a spanning tree $T$ in $G$.  If $r=b$ then $T$ itself is a maximum balanced subtree (subgraph also) of $G$. Without loss of generality assume that $r<b$.  So by Lemma~\ref{lemma-2colorable}, $T$ has at least $1$ blue vertex. Now we remove that blue vertex from $T$. Using similar reason, we repetitively remove $(b-r)$ blue vertices from $T$. Finally, $T$ becomes balanced subgraph of $G$, with $r$ many vertices of each color.
	
	\vspace{.2cm}
	
	\noindent {\bf Running time:}
	Finding a spanning tree in $G$ requires $O(n^2)$ time. To find all the leaves in the tree $T$ requires $O(n^2)$ time (breadth first search). Hence the total time is needed is $O(n^2)$.  
	
	Now, we state the following theorem.
	
	\begin{theorem}
		Given a bipartite graph $G$ with a proper 2 coloring ($r$ red or $b$ blue vertices), then in $O(n^2)$ time we can find a balanced connected subgraph in $G$ having $\min \{b,r\}$ vertices of each color.
	\end{theorem}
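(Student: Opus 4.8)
The plan is to turn the observation that precedes the statement into a clean correctness-and-complexity argument, with Lemma~\ref{lemma-2colorable} as the engine. First I would dispose of the trivial case: if $r=b$, then $G$ itself (being connected by hypothesis) is already color-balanced and connected, and since every color-balanced vertex set has at most $\min\{b,r\}$ nodes of each color, $G$ is optimal. So assume without loss of generality $r<b$, so that the minority color is red and $\min\{b,r\}=r$.

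Next I would compute a spanning tree $T$ of $G$ (it exists since $G$ is connected) and then peel it down, maintaining the invariant: after $i$ removals we hold a tree $T_i$ that is a connected subgraph of $G$, properly $2$-colored, with exactly $r$ red and $b-i$ blue vertices. Start from $T_0=T$. Whenever $i<b-r$ the tree $T_i$ has strictly more blue than red nodes, so Lemma~\ref{lemma-2colorable} yields a blue leaf $\ell$ of $T_i$; deleting $\ell$ gives $T_{i+1}$, which is again a tree (deleting a leaf preserves treeness) and again properly $2$-colored, with one fewer blue node. After exactly $b-r$ steps we reach $T_{b-r}$ with $r$ red and $r$ blue vertices. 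Setting $V'=V(T_{b-r})$, the induced subgraph $G[V']$ contains $T_{b-r}$ as a spanning subgraph, so $G[V']$ is connected; it is also color-balanced with $r$ vertices of each color, so it is a valid solution of the \bcs~problem.

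For optimality, any color-balanced subset of $V$ uses at most $\min\{b,r\}=r$ vertices of the minority color and hence at most $r$ of each, so $G[V']$ has maximum cardinality. For the running time: a spanning tree is found by BFS or DFS in $O(n+|E|)=O(n^2)$ time; each peeling step scans the current tree for a blue leaf in $O(n)$ time, and there are $b-r=O(n)$ such steps, for $O(n^2)$ overall; the two contributions sum to the claimed $O(n^2)$.

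The argument is essentially routine once Lemma~\ref{lemma-2colorable} is available; the one point deserving a word of care is the step from the peeled tree $T_{b-r}$ to the induced subgraph $G[V']$ — one must note that the extra edges of $G$ among $V'$ cannot disconnect $G[V']$, so the ``connected \emph{induced} subgraph'' requirement of \bcs~is genuinely met. A secondary caveat is the standing assumption that $G$ is connected (as in the problem definition); if one wished to drop it, the same peeling would be carried out inside an appropriate connected component, but this is not needed for the statement as phrased.
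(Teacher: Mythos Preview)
Your proposal is correct and follows essentially the same approach as the paper: compute a spanning tree, then repeatedly invoke Lemma~\ref{lemma-2colorable} to strip off $b-r$ blue leaves until the tree is balanced, arguing optimality from the trivial upper bound $\min\{b,r\}$. Your write-up is in fact more careful than the paper's sketch (you make the invariant explicit, justify the passage from the peeled tree to the induced subgraph, and give a cleaner running-time accounting), but the underlying idea is identical.
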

	
	\subsection{Graphs of diameter~2}
	In this section, we give a polynomial time algorithm which solves the \bcs-problem where the input graph has diameter 2. Let $G(V,E)$ be such a graph which contains $b$ blue vertex set $B$ and $r$ red vertex set $R$. We find a balanced connected subgraph $H$ of $G$ having $\min\{b,r\}$ vertices of each color. Assume that $b<r$. This can be done in two phases. In phase 1, we generate an induced connected subgraph $G'$ of $G$ such that (i) $G'$ contains all the vertices in $B$, and (ii) the number of vertices in $G'$ is at most $(2b-1)$. In phase 2, we find $H$ from $G'$. 
	
	\begin{description}
		\item[Phase 1] To generate $G'$, we use the following result.
		
		\begin{lemma}\label{dia_2}
			Let $G=(V,E)$ be a graph of diameter 2. Then for any pair of non adjacent vertices $u$ and $v$ from $G$, there always exists a vertex $w$ such that both $(u,w)\in E$ and $(v,w) \in E$.
		\end{lemma}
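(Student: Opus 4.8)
The statement is essentially a restatement of the definition of diameter, so the plan is short. First I would recall that a graph of diameter $2$ is connected and that every pair of distinct vertices is joined by a path with at most $2$ edges; write $d(\cdot,\cdot)$ for the shortest-path distance in $G$. Fix non-adjacent vertices $u$ and $v$. Since they are a pair of (distinct) vertices, $d(u,v)\ge 1$, and since they are non-adjacent, $d(u,v)\ne 1$; combined with $d(u,v)\le 2$, which is exactly the diameter hypothesis, this forces $d(u,v)=2$.

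Next I would simply unpack what a shortest $u$--$v$ path of length $2$ looks like: it has the form $u,\,w,\,v$ for some vertex $w\notin\{u,v\}$, and by the definition of a path its two edges lie in $E$, i.e. $(u,w)\in E$ and $(v,w)\in E$. This $w$ is the desired common neighbour, which completes the argument.

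There is no real obstacle here; the only points that deserve a sentence of care are the implicit assumption $u\ne v$ (so that ``non-adjacent'' rules out $d(u,v)=0$ and hence pins the distance down to exactly $2$), and the fact that diameter $2$ already presupposes $G$ is connected, so that $d(u,v)$ is finite and a shortest path actually exists. Both are standard conventions. The lemma will then be invoked in Phase~1 of the diameter-$2$ algorithm: for a blue vertex not yet joined to the part of the subgraph built so far, Lemma~\ref{dia_2} supplies a vertex $w$ adjacent to both endpoints, which can be added to keep the induced subgraph connected while controlling the number of extra (possibly red) vertices introduced.
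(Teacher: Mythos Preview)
Your argument is correct and complete. The paper states this lemma without proof, treating it as an immediate consequence of the definition of diameter, which is exactly what your unpacking of ``$d(u,v)=2$ implies a length-$2$ path $u,w,v$'' makes explicit.
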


		We first include $B$ in $G'$. Now we have the following two cases.
		\begin{description}
			
			\item \noindent{\textbf{Case 1:}} The induced subgraph $G[B]$ of $B$ is connected. In this case, $G'$ is $G[B]$.
			
			\item \noindent{\textbf{Case 2:}} The induced subgraph $G[B]$ of $B$ is not connected.
			Assume that $G[B]$  has $k(>1)$ components. Let $B_{1},B_{2},..., B_{k}$ be $k$ disjoints sets of vertices such that each induced subgraph $G[B_i]$ of $B_i$ in $G$ is connected. Now using Lemma \ref{dia_2}, any two vertices $v_i \in B_i$ and $v_j \in B_j$ are adjacent to a vertex say $u_\ell \in R$. We repetitively apply  Lemma \ref{dia_2} to merge all the $k$ subgraphs into a larger  graph. We need at most $(k-1)$ red vertices to merge $k$ subgraph. We take this larger graph as the graph $G'$. 
		\end{description}
		
		\vspace{.2cm}
		
		\item[Phase 2] In this phase, we find the balanced connected subgraph $H$ with $b$ vertices of each color. Note that the graph $G'$ generated in phase 1 contains $b$ blue and at most $(b-1)$ red vertices. Assume that $G'$ contains $b'$ red vertices. We add $(b-b')$ red vertices from $G \setminus G'$ to $G'$. This is possible since $G$ in connected.
	\end{description}
	
	\vspace{.2cm}

	\noindent {\bf Running time:} In phase 1, first finding all the blue vertices and it's induced subgraph takes $O(n^2)$ time. Now to merge all the $k$ components into a single component which is $G'$ needs  $O(n^2)$ time.  In phase 2, adding $(b-b')$ red vertices to $G'$ takes $O(n^2)$ time as well. Hence, total time requirement  is $O(n^2)$.  
	
	\begin{theorem}
		Given a graph $G = (V, E)$ of diameter~$2$, where the vertices in $G$ are colored either red or blue. If $G$ has $b$ blue and $r$ red vertices then, in $O(n^2)$ time we can find a balanced connected subgraph in $G$ having $\min \{b,r\}$ vertices of each color.
	\end{theorem}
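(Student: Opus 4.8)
The plan is to establish the bound $\min\{b,r\}$ constructively, following the two-phase scheme described just above the theorem, and then to verify correctness and the $O(n^2)$ running time. By the symmetry of the two colors we may assume $b\le r$. If $b=r$, then $G$ itself — which is connected, having finite diameter — is already a balanced connected subgraph with $\min\{b,r\}$ vertices of each color, so from now on assume $b<r$; the target is a connected induced subgraph $H$ with exactly $b$ blue and $b$ red vertices.

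First I would carry out Phase~1 and build a connected induced subgraph $G'$ containing all $b$ blue vertices together with at most $b-1$ red vertices. If $G[B]$ is connected, take $G'=G[B]$. Otherwise let $B_1,\dots,B_k$ (with $k>1$) be the vertex sets of the connected components of $G[B]$, and merge them one at a time. The key observation is that if $v_i\in B_i$ and $v_j\in B_j$ with $i\ne j$, then $v_i$ and $v_j$ are non-adjacent in $G$ — an edge between two blue vertices would lie in the \emph{induced} graph $G[B]$ and hence put them in the same component — so by Lemma~\ref{dia_2} they share a common neighbor $w$, and moreover $w$ must be red, since a blue common neighbor would itself merge $B_i$ and $B_j$ inside $G[B]$. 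Thus each splice of two current pieces into one costs at most one new red vertex, $k-1$ splices suffice, and since every $B_\ell$ contains at least one blue vertex we have $k\le b$. Hence $G'$ is connected, contains all $b$ blue vertices, and has at most $k-1\le b-1$ red vertices.

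Next comes Phase~2. Write $b'\le b-1$ for the number of red vertices already in $G'$; we must add $b-b'\ge 1$ more red vertices while preserving connectivity, and there are $r-b'>b-b'>0$ red vertices available outside $G'$. The decisive point is that \emph{every} vertex of $V\setminus V(G')$ is red, since all blue vertices lie in $G'$; therefore, as $G$ is connected, whenever the current subgraph is not all of $V$ it has an edge to some outside vertex, and that vertex is necessarily red. Adding such a vertex keeps the induced subgraph connected, and after $b-b'$ additions we obtain $H$ with $b$ vertices of each color. Note that diameter~$2$ enters only through Lemma~\ref{dia_2} in Phase~1; Phase~2 uses nothing beyond connectivity of $G$.

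Finally I would bound the running time: computing $G[B]$ and its components is one graph search ($O(n^2)$ with an adjacency matrix); the at most $k-1\le n$ merges each look for a common neighbor by scanning the vertex set, for $O(n^2)$ in total; and the vertex-growing in Phase~2 is again a graph search, $O(n^2)$. The step I expect to be the main obstacle is not a deep argument but the bookkeeping: one must carefully justify the two ``color-forcing'' facts — that the common neighbors supplied by Lemma~\ref{dia_2} are red, so the count of extra red vertices stays below $b$, and that the vertices dragged into $H$ in Phase~2 are also red — together with the inequality $k\le b$; once these are in place, the construction and its analysis are routine.
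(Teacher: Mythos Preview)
Your proposal is correct and follows essentially the same two-phase construction as the paper, with the same use of Lemma~\ref{dia_2} to merge the blue components and the same connectivity argument to grow $G'$ by red vertices. If anything, you are more careful than the paper: you explicitly justify why the common neighbor supplied by Lemma~\ref{dia_2} must be red and why every vertex absorbed in Phase~2 is red, points the paper asserts without argument.
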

	
	\section{Conclusions and open questions}
	We have introduced the problem of finding largest size (cardinality of the vertex set) balanced connected subgraph in a simple connected graph. We have seen that this problem is \np-complete for bipartite graphs, chordal graphs, or planar graph. We have given polynomial time algorithms for solving this problem for trees, graphs with proper $2$ coloring, split graphs and graphs with diameter~$2$. So the obvious question is can other special classes of graphs be found to yield polynomial time algorithms? For example, outer planar graphs, interval graphs, regular graphs, permutation graphs etc. Here we give another open question. Let $G$ be a given graph and $ OPT$ be the number of vertices in an optimal solution of \bcs~problem.  Is there any polynomial time $(\alpha,\beta)$  approximation algorithm which yields a solution $ H $ such that minimum number of blue and red vertices in $ H $ is at most $ \alpha \times OPT $ and difference between the number of blue and red vertices in $ H $ is at most $\beta$?

\end{document}